\def \E {\mathbb{E}}
\newtheorem{definition}{\bf Definition}
	\newtheorem{theorem}{\bf Theorem}
	\newtheorem{thm}[theorem]{\bf Theorem}
\begin{document}
	
\title{\bfseries Dissecting Multi-Level Pricing Schemes in the Context of eCW Client Engagement}

\author{
	Paramahansa Pramanik\footnote{Corresponding author, {\small\texttt{ppramanik@southalabama.edu}}}\; \footnote{Department of Mathematics and Statistics,  University of South Alabama, Mobile, AL, 36688,
		United States.}
	\and
	Joel Graff\footnote{Intelligent Medical Objects Inc.(IMO), Rosemont, IL 60018}
	\and 
	Mike Decaro\footnotemark[3]
}

\date{\today}
\maketitle

\begin{abstract}
This paper presents a usage-based pricing framework for the Intelligent Medical Objects’ ProblemIT™ Portal utilized by eClinicalWorks (eCW) clients. The approach begins by determining a stable monthly unit price per request, estimated as the median from semi-parametric Bayesian cubic smoothing spline analyses covering the period November 2015 to December 2016. Clients are subsequently segmented into eight volume-based tiers, with total charges computed by multiplying the derived median unit price by each client’s total request count. Examination of the dataset reveals that 806 accounts with a single registered user and 470 accounts with two registered users both exhibit disproportionately high request volumes. The proposed model incorporates adjustments to account for these anomalies.
\end{abstract}

\subparagraph{Key words:}
Business pipeline, logistic regression, Eclinical data, spline regression.

\section{Introduction.}
eClinicalWorks (eCW) \citep{edsall2008user} is widely regarded as a leading provider of Electronic Health Record (EHR) systems, delivering integrated clinical and administrative solutions to a broad user base. Its technologies are adopted by more than 125,000 clinical specialists and approximately 850,000 additional healthcare professionals across diverse practice environments \citep{evans2016electronic}. The company’s platforms support a range of functions, including patient record management, clinical decision support, and revenue cycle optimization. Among these offerings, the Intelligent Medical Objects’ ProblemIT™ solution serves as a specialized portal that facilitates precise clinical terminology mapping, thereby enhancing documentation quality, billing accuracy, and interoperability between healthcare systems. This portal is central to the present study, which seeks to understand usage patterns and develop a robust pricing framework grounded in empirical data \citep{dasgupta2023frequent,hertweck2023clinicopathological,khan2024mp60}.

The analytical framework is constructed by first estimating the total volume of ProblemIT™ portal requests each month, then dividing this value by the monthly invoiced amount to derive a consistent unit price per request. November 2015 was selected as the baseline for the study, given that October 2015 marked the nationwide adoption of a new medical billing code system, an industry-wide transition that substantially altered how clients engaged with the ProblemIT™ solution \citep{kakkat2023cardiovascular,khan2023myb}. This transition, widely associated with the implementation of ICD-10 coding standards, represented a structural shift in documentation and coding behavior, potentially influencing both request frequency and billing dynamics \citep{vikramdeo2023profiling,vikramdeo2024abstract}. To ensure temporal consistency and avoid distortions from the immediate effects of this policy change, the dataset is confined to a 14-month period spanning November 2015 through December 2016. This post-transition interval provides a stable analytical window in which to observe typical client behavior, allowing the price-per-request metric to be evaluated under steady-state conditions rather than transitional anomalies \citep{jackson2020health,pramanik2022lock}.

The first stage of our analysis focuses on deriving a stable estimate of the monthly price per request. In this context, stability refers to a value that exhibits minimal variation over time, such that fluctuations, if present, are only detectable beyond the fourth decimal place \citep{pramanik2021optimala}. Establishing this level of precision is essential to ensure that subsequent pricing calculations are not influenced by short-term volatility or transient anomalies. To formally evaluate the stability of the price-per-request series, we apply two widely used statistical tests for unit roots: the augmented Dickey-Fuller (ADF) test \citep{dickey1979distribution} and the Phillips-Perron (PP) test \citep{phillips1988testing}. The results from both procedures indicate that the series is stationary in its first-order difference, confirming that the underlying process meets the stability criteria outlined above \citep{pramanik2024parametric}.

Having established stationarity, we proceed to identify the most appropriate time series model for capturing the observed dynamics. Competing models are compared using the Akaike Information Criterion (AIC) \citep{akaike2003new}, with lower values indicating superior fit. This comparison reveals that a first-order moving average (MA) model yields the lowest AIC among the candidates considered. Nevertheless, while the MA(1) model provides a satisfactory baseline, our objective also includes minimizing the mean squared error (MSE) in the fitted values. To this end, we implement a semi-parametric Bayesian cubic smoothing spline approach \citep{zhang1998semiparametric}, which demonstrates a superior fit to the observed data \citep{perou2000molecular,prat2010characterization}. To guard against the undue influence of outliers and to reduce the risk of overfitting, both of which could distort cost estimates, we adopt the median of the spline’s predicted values as the representative monthly price per request. This approach combines statistical rigor with robustness, ensuring that the resulting price measure is both stable and reliable for downstream pricing model development \citep{pramanik2021scoring}.

Once the predicted monthly price per usage has been established through the preceding statistical procedures, the next stage of the analysis involves applying this value to the observed distribution of portal activity. Specifically, the total number of requests generated by all clients over the study period is organized into eight distinct volume-based tiers, as summarized in Table 1. For each tier, the total number of requests is multiplied by the estimated monthly price per request, yielding the corresponding cost for that usage category \citep{pramanik2020optimization,pramanik2023semicooperation}. This tiered approach provides a structured framework for assessing how clients are distributed across varying levels of portal activity, thereby enabling the identification of patterns and anomalies in usage behavior. Of particular interest are instances where eCW customers maintain a relatively small number of user licenses, sometimes only one or two, yet generate a disproportionately high total volume of requests \citep{pramanik2024estimation,vikramdeo2024mitochondrial}. Such cases, when contrasted with clients whose request volume more closely aligns with their reported license count, suggest potential inefficiencies or imbalances in the existing cost allocation structure. The identification of these outliers is made possible by direct comparison between the proposed usage-based pricing model and the current pricing framework, which determines charges primarily on the basis of the reported number of end users rather than actual system utilization \citep{pramanik2024motivation}. This comparison not only highlights disparities in client usage patterns but also offers insight into how the adoption of a usage-driven tiered pricing model could address these discrepancies and promote a more equitable alignment between service consumption and cost \citep{pramanik2020motivation}.

The detection of accounts with minimal license counts but unusually high request volumes has important implications for both operational efficiency and revenue management within the eCW client base \citep{pramanik2024bayes,bulls2025assessing}. From a financial perspective, such usage patterns indicate that the current license-based pricing model may underrepresent the true level of system demand for these clients, resulting in a potential shortfall in revenue relative to actual resource utilization. From an equity standpoint, this misalignment can inadvertently lead to cross-subsidization, whereby low-usage clients effectively bear a proportion of the costs generated by high-usage clients who are charged less than their relative system consumption would justify. Incorporating a usage-based tiered structure mitigates this imbalance by ensuring that charges are proportionate to the intensity of portal activity, thereby distributing costs more equitably across the customer base \citep{pramanik2023cont,pramanik2024estimation}. Moreover, identifying these high-usage anomalies offers strategic insights for refining licensing agreements, guiding client outreach, and informing future product design to better match system capabilities with customer needs \citep{pramanik2024estimation1}. By linking cost recovery more closely to actual demand, the proposed pricing model not only addresses inefficiencies in the current framework but also supports sustainable growth, improved resource allocation, and enhanced transparency in client billing practices.

\section{Background and Rationale.}

An examination of the dataset reveals a noteworthy association between the number of active license holders and the corresponding total number of portal requests, with the latter generally exhibiting a lower bound that appears to be determined by the license count. This pattern suggests that, under typical conditions, organizations with more licensed users tend to generate higher overall request volumes, while those with fewer licenses exhibit proportionally reduced usage \citep{pramanik2023cmbp,pramanik2023optimization001}. However, several notable exceptions to this trend emerge, wherein certain customers maintain a limited or even nonexistent number of licensed users yet record exceptionally high usage levels. These cases are particularly striking given the magnitude of the discrepancy between reported licenses and observed activity. A prominent example occurred in January 2016, when Central Ohio Primary Care Physicians reported zero licensed users yet registered a total of 229,565 requests during that month alone \citep{pramanik2023path}. This phenomenon is not confined to a single organization; multiple clients, including Methodist Healthcare Physician Services, Central Ohio Primary Care Physicians, and Comprehensive Pain Specialists, consistently displayed similar anomalies, with reported license counts of zero while generating monthly request volumes exceeding 100,000 \citep{pramanik2021,pramanik2021consensus}. Such findings highlight a structural mismatch between the licensing data and actual system utilization, raising important questions about the accuracy of reported license counts, the mechanisms through which system access is managed, and the implications for both pricing models and resource allocation.

Further analysis of the dataset uncovers additional patterns that deviate from the expected relationship between the number of licensed users and total portal requests. Several customers maintain only a single reported license yet generate substantial monthly request volumes, well beyond what would typically be anticipated for such a small user base. Notable examples include IHealth Family Care, Sovereign Health Medical Group, Ospa, Todd J Kazdan DO PA, Pedro E. Estorque Jr., and Samuel A. Boliver M.D. Inc. Among these, Sovereign Health Medical Group recorded the highest activity level within this category, submitting 44,479 requests in December 2016 despite holding only one license. This level of usage suggests either intensive use by a single licensed account or potential discrepancies in the reporting of license data relative to actual access patterns \citep{pramanik2022stochastic}. By contrast, organizations with substantially larger license counts display request levels that, while high, appear more proportional to their licensed user base \citep{pramanik2024stochastic}. For instance, WellMed Medical Management, Inc., which held 450 licenses, reached its peak monthly activity in August 2016 with 294,622 requests. This example demonstrates a more conventional alignment between license volume and total requests, further emphasizing the anomalies observed in smaller-license accounts \citep{pramanik2025stubbornness}.

In addition to these one-license high-usage cases, the data reveal another set of outliers involving customers with no recorded licenses who nevertheless maintain significant portal activity. Entities such as JSA Healthcare Corporation, Children’s National Medical Center, First Choice Health Centers, Inc., Sheridan Healthcorp, and The Toledo Clinic exhibit this pattern, consistently registering monthly request counts exceeding 20,000 despite reporting zero licenses \citep{pramanik2025factors}. Among these, The Toledo Clinic represents the most extreme instance, with a request volume of 292,270 in January 2016. The persistence of such anomalies across multiple organizations indicates that the observed discrepancies are not isolated incidents but rather a recurring feature of the dataset \citep{pramanik2025optimal}. These findings raise critical questions regarding the accuracy and completeness of license reporting mechanisms, the potential for shared or unlicensed access pathways, and the implications for both pricing accuracy and system resource allocation \citep{pramanik2016,pramanik2021thesis}. Identifying and addressing such cases is essential for developing a fair and effective usage-based pricing model, as they represent points where current billing structures may fail to capture the true scale of system utilization \citep{maki2025new}.

The observed usage patterns indicate that customers with relatively few licensed users often generate disproportionately high request volumes, a phenomenon that is not adequately addressed by the existing user-based pricing framework \citep{polansky2021motif}. Under the current model, charges are determined primarily by the number of reported licenses, meaning that organizations with extensive system utilization but minimal licenses may contribute significantly to overall demand on IMO services while paying comparatively less than their actual usage would warrant. This misalignment creates an inherent inequity in cost distribution, as the financial burden is not proportionately shared among clients according to the intensity of their system usage \citep{pramanik2023cont,pramanik2024estimation}. Furthermore, it introduces inefficiencies in revenue recovery, as high-volume clients effectively underpay relative to their resource consumption, while low-volume clients may bear a relatively greater share of the cost. Addressing this imbalance requires rethinking the pricing framework so that it better reflects real-world utilization patterns rather than relying solely on static license counts \citep{bailey2017structural,menikdiwela2022association}.

In response to this disparity, we propose replacing the traditional user-based pricing model with a request-based approach that ties costs directly to system usage. The implementation of such a framework begins with the derivation of a stable, predicted monthly price per request, following the methodology outlined in \citep{pramanik2024dependence}. Once established, this unit price is applied to each usage group by multiplying it by the total number of requests recorded for that group during a given month. This calculation effectively yields a pricing structure in which the cost assigned to a customer is equal to the predicted monthly price per request multiplied by their total request volume, thereby ensuring that charges scale proportionally with actual usage \citep{pramanik2025stubbornness,pramanik2025factors}. Such a model enhances fairness by aligning payments with service consumption and has the potential to improve both transparency and efficiency in billing, while also providing a more sustainable basis for revenue generation and resource allocation.

Integrating the request-based framework with the proposed eight-tier categorization system enables a more granular and equitable allocation of costs across the client base. By segmenting customers into discrete usage tiers according to their total monthly request volumes, the model not only captures variations in demand but also allows for systematic identification of anomalies, such as clients with zero or minimal licenses who nonetheless exhibit disproportionately high portal activity \citep{pramanik2023optimization001,pramanik2025strategies}. In this structure, each tier is assigned a total cost derived from the stable predicted monthly price per request multiplied by the aggregate number of requests for that tier, ensuring that clients within the same tier are billed consistently relative to their consumption. This tiered approach serves a dual purpose: it simplifies the practical implementation of request-based billing for administrative purposes while simultaneously addressing the inequities inherent in the user-based model \citep{pramanik2025impact,pramanik2025strategic}. High-usage, low-license clients previously undercharged under the traditional system are thus brought into alignment with their true level of service utilization, fostering fairness, transparency, and improved revenue recovery.

\section{Data Examination and Interpretation.}

The primary data sources employed in this analysis are extracted from multiple tables housed within the IMO Data Warehouse (IDW), each providing distinct but complementary information necessary for constructing the proposed pricing model. The \texttt{channel\_sales\_fact} table contains key operational and financial variables, including the number of end users, the effective and expiration dates of licenses, and the invoiced amounts associated with each account. Customer-specific identifiers are obtained from the \texttt{customer\_dim} table, which provides client names along with globally unique identifiers (GUIDs) for the ProblemIT\texttrademark\ portal, ensuring accurate linkage across datasets. The distribution channel information, specifically identifying eCW clients is stored in the \texttt{channel\_group\_dim} table. Additional transactional and contractual details are drawn from the \texttt{mscrm.sales\_order\_detail\_entity} table, while request-level activity data, including the number of portal searches or queries, is obtained from the \texttt{log.org\_stats} table. An important constraint observed in the licensing data is that the license term is capped at a maximum of twelve months, necessitating monthly aggregation for consistent temporal analysis. To compute the total invoiced amount for each month, it is essential to determine the number of active licenses at that point in time \citep{pramanik2024dependence,pramanik2024measuring}. This requires identifying the number of existing users carried over from the previous month, adding the number of new licenses issued during the current month, and subtracting the number of licenses set to expire at the start of the month. The resulting figure represents the total number of licenses active in that month and serves as a foundational input for calculating monthly invoiced amounts and, subsequently, the unit price per request \citep{pramanik2024parametric,yusuf2025prognostic}.

To derive the monthly price per request, we compute the ratio of the total number of portal requests generated during a given month to the total invoiced amount recorded for that same month. This metric serves as a standardized unit price, enabling comparisons of usage costs across different periods and customer groups. To illustrate the temporal behavior of this measure, Figures~\ref{fig:l1} and~\ref{fig:l2} display two separate visualizations of the monthly price per request, each based on a different observational window. The first plot, shown in Figure~\ref{fig:l1}, spans a full twenty-four months, covering the period from January 2015 through December 2016. This extended time frame provides a broader historical perspective, capturing variations both before and after notable changes in customer behavior. In contrast, the second plot, presented in Figure~\ref{fig:l2}, focuses exclusively on the fourteen-month interval from November 2015 through December 2016. This shorter time frame was selected to isolate a post-transition period of interest, as November 2015 marks the point immediately following the implementation of a new medical billing code system in October 2015. The comparative presentation of these two intervals allows for the examination of both long-term and post-transition trends in the monthly price per request, thereby offering insight into the stability and potential shifts in cost structure over time.

\begin{figure}[htbp]
	\centering
	\includegraphics[width=7cm]{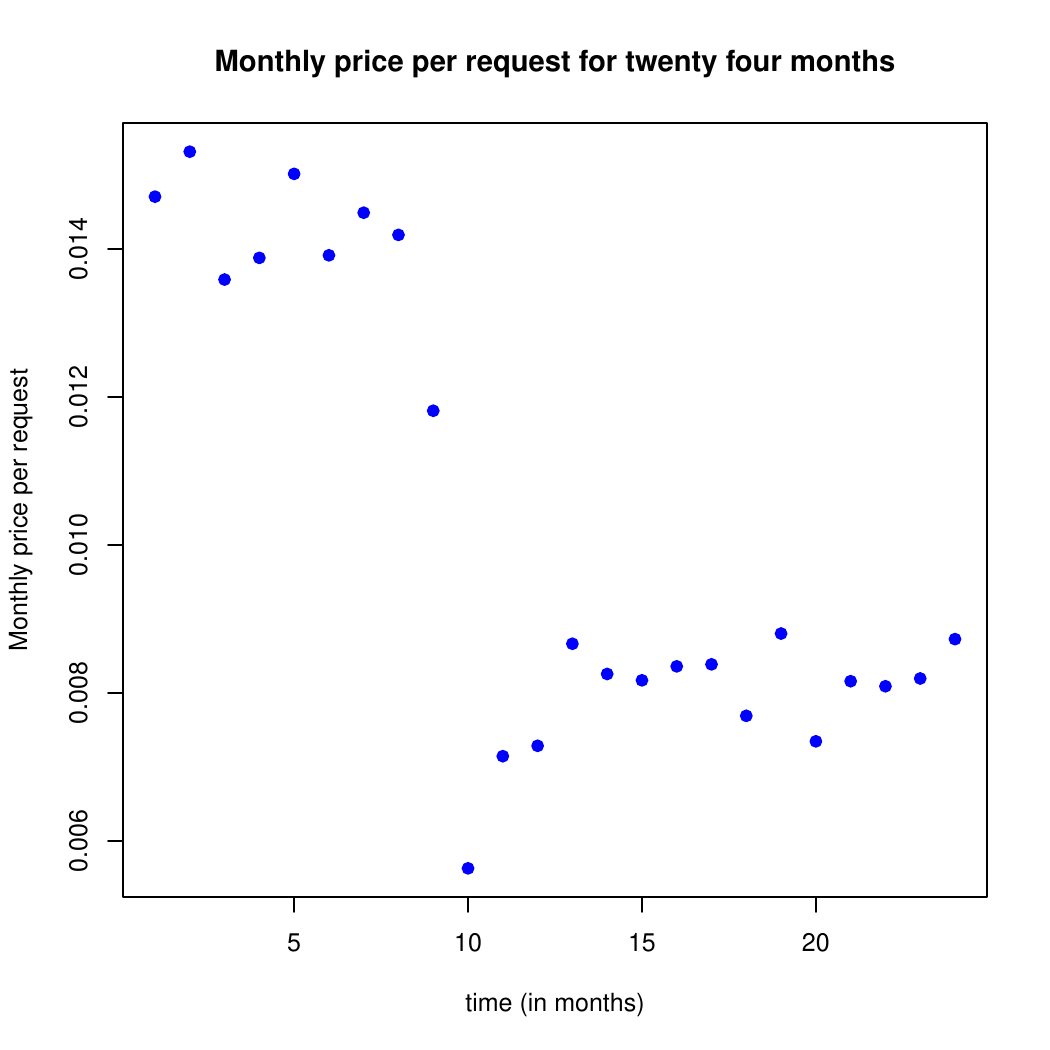}
	\caption{Twenty four month time period.}
	\label{fig:l1}
\end{figure}

\begin{figure}[htbp]
	\centering
	\includegraphics[width=7cm]{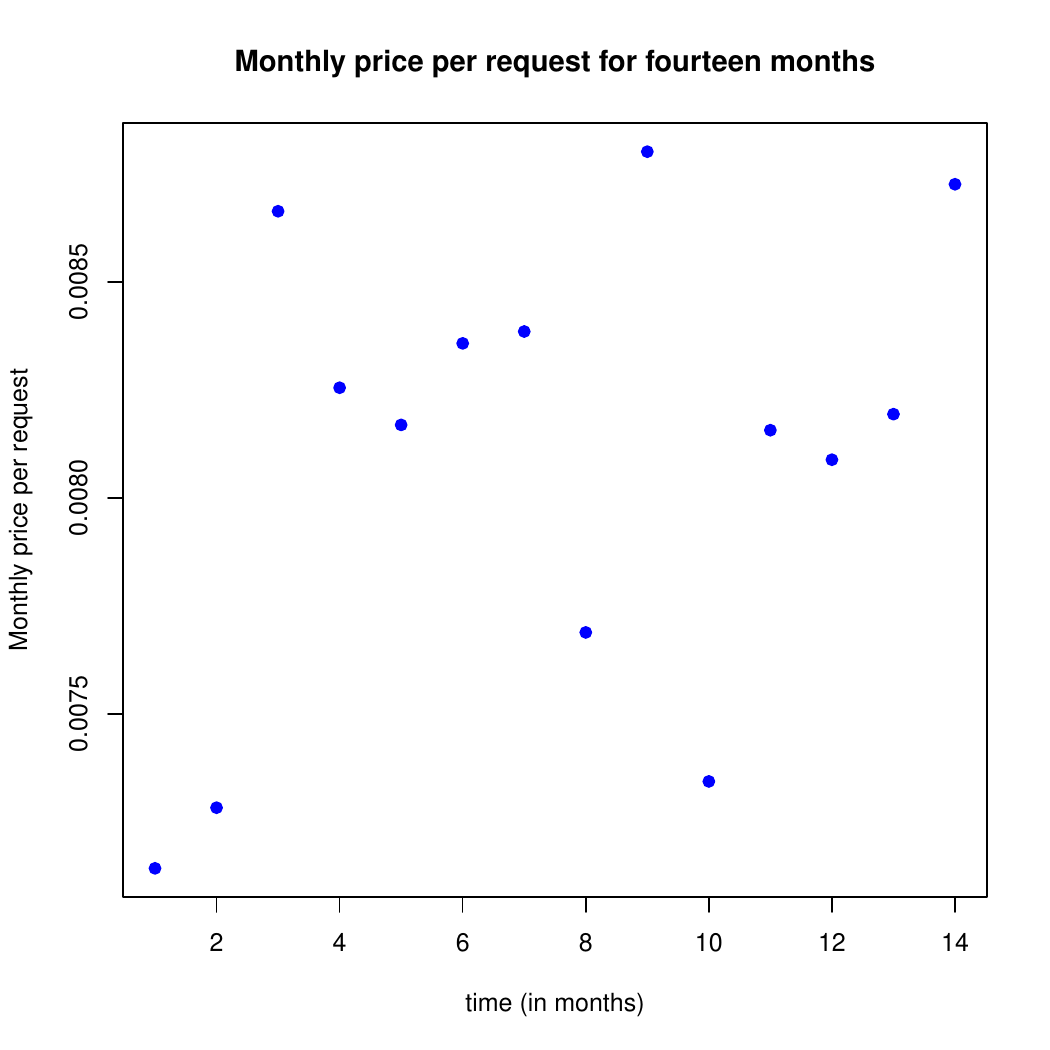}
	\caption{Twenty four month time period.}
	\label{fig:l2}
\end{figure}
The selection of the time frame illustrated in Figure~\ref{fig:l1} is driven primarily by two considerations. First, prior to 2015, the available dataset does not provide sufficient information on the number of active users, limiting the reliability of any computed price-per-request estimates for earlier periods. Second, October 2015 marks the “ICD-10 go-live” event, during which the transition to the ICD-10 medical billing code system produced a pronounced surge in overall portal usage. This spike is clearly visible in the early portion of the plot and is followed by a period of decline that transitions into a more stable pattern throughout 2016. While data from January to May 2017 are available, these months are intentionally excluded from Figure~\ref{fig:l1} because they are reserved for a separate analysis in which predicted monthly prices per request are directly compared with the corresponding observed values. An additional observation in the longer time frame is the emergence of an upward trend in the monthly price per request beginning in December 2016. This increase is likely linked to the natural fluctuations of the business cycle and is expected to reverse after reaching a certain threshold. In Figures~\ref{fig:l1} and~\ref{fig:l2}, the vertical axis represents the monthly price per request, while the horizontal axis measures time in months from the respective starting points, January 2015 for Figure~\ref{fig:l1} and November 2015 for Figure~\ref{fig:l2}. A closer inspection of Figure~\ref{fig:l1} reveals two visually distinct clusters of data points: one located in the upper left region of the plot and another in the lower right. The upper left cluster, associated with the immediate post-ICD-10 implementation surge, is treated as an outlier and removed from subsequent stability analysis to avoid distortion of the price estimates. The lower right cluster displays a more consistent distribution of values, indicating a steadier relationship between usage and pricing. As a result, the refined dataset used for Figure~\ref{fig:l2} contains only 14 monthly observations corresponding to this stable post-transition period. Within this shorter interval, the points remain somewhat dispersed, reflecting the inherent variability in monthly portal usage even after excluding the extreme fluctuations of the earlier period.

\section{Model.}

In this section, we present the methodological approach used to identify the most appropriate model for deriving a stable estimate of the monthly price per request, which forms the foundation for the proposed request-based pricing framework. The primary objective is to ensure that the calculated price per request remains consistent over time, minimizing the influence of short-term fluctuations while accurately reflecting underlying usage patterns. After evaluating several potential modeling strategies, we adopt a cubic natural smoothing spline approach, as this method demonstrates superior predictive accuracy when applied to the current dataset. Its flexibility in capturing nonlinear relationships, combined with its ability to produce smooth and stable estimates, makes it particularly well-suited for the conditions under consideration. The resulting predicted monthly prices per request, generated by this model, are then integrated into the request-based structure to determine cost allocations across different usage groups.

\subsection{Testing of Stability of Monthly Prices.}
To assess whether the trend of price per request in the scatter plot in figure \ref{fig:l2} remains stable over time, we further analyze the plots of the autocorrelation functions (shown in figure \ref{acfplot}). Using our historical data, we represent the monthly price per request at time point $t$ as $y_t$. For another time point $s$, the autocorrelation is defined as follows,
\begin{align}
\label{1}
\rho_{t,s}=\frac{\E[(y_t-\mu_t)(y_s-\mu_s)]}{\sigma_t \sigma_s},
\end{align}
where $\E$ is the expected value operator, $y_s$ monthly price per request at time s, $\mu_t$, $\mu_s$, $\sigma_t$ and $\sigma_s$ are the means and standard deviations of $y_t$ and $y_s$, respectively. Since it is a correlation, then $\rho_{t,s}\in[-1,1]$.

As the autocorrelation $\rho_{t,s}$ lies within the range $[-1,1]$, we can determine whether to use a trend-stable (autoregressive) model or a trend-stochastic (moving average) model. Stochasticity refers to the randomness of a system. In other words, trend stochasticity means that the trend of the system itself behaves like a random variable with some probability. If we assume trend stability, we focus solely on autoregressive models, while for trend stochasticity, we consider the moving average model. Additionally, there may be cases where the trend of price per request is a combination of both models. In the following sections, we will discuss these models individually.

Consider the first order autoregressive  model (AR(1))
$$y_t=\rho_{t,t-1}\ y_{t-1}+\epsilon_t,$$
where $\rho_{t,t-1}$ represents the autocorrelation between time $t$ and $(t-1)$ and $\epsilon_t$ is an error term (termed as \emph{white noise}) such that $\epsilon_t\overset{i.i.d}{\sim}\ N(0,\sigma^2)$. The randomness of $y_t$ comes from  $\epsilon_t$. Clearly, the error term follows a normal distribution with mean zero and constant variance $\sigma^2$. Since the variance is constant over time, we call it as homoskedastic variance. More generally, when the variance changes over time (i.e., $\sigma_t^2$ instead of $\sigma^2$)  is called as heteroskedastic variance. In this paper we assume homoskedastic variance. If $|\rho_{t,t-1}|=1$ then the above AR model yields, 
$$y_t=y_{t-1}+\epsilon_t.$$
Thus,  the variation of $y_t$ comes not only from the error term but also from $y_{t-1}$. Therefore, the system has a unit root problem. In other words, $y_t$ only depends on the randomness of the process. From the previous discussion we know that the randomness come from the error part  $\epsilon_t$; with $|\rho_{t,t-1}|=1$ it has more influence on $y_t$ than with $|\rho_{t,t-1}|<1$. Therefore, AR(1) is applicable when $|\rho_{t,t-1}|<1$. In other words, when $\rho_{t,t-1}=1$ autoregressive model is not appropriate to use. Therefore, we use moving average model of first degree as 
$$ y_t=\epsilon_t+\theta\ \epsilon_{t-1},$$
where $\epsilon_t$ and $\epsilon_{t,t-1}$ are white noises. It is important to note that AR and MA are two completely separate models under $|\theta|<1$ ; MA(1)=AR($\infty$), and vice versa (see Theorem 1 in the appendix).

\begin{figure}[htbp]
	\centering
	\subfloat[Autocorrelation function (ACF) of monthly price per request.]{%
		\includegraphics[width=5.5cm]{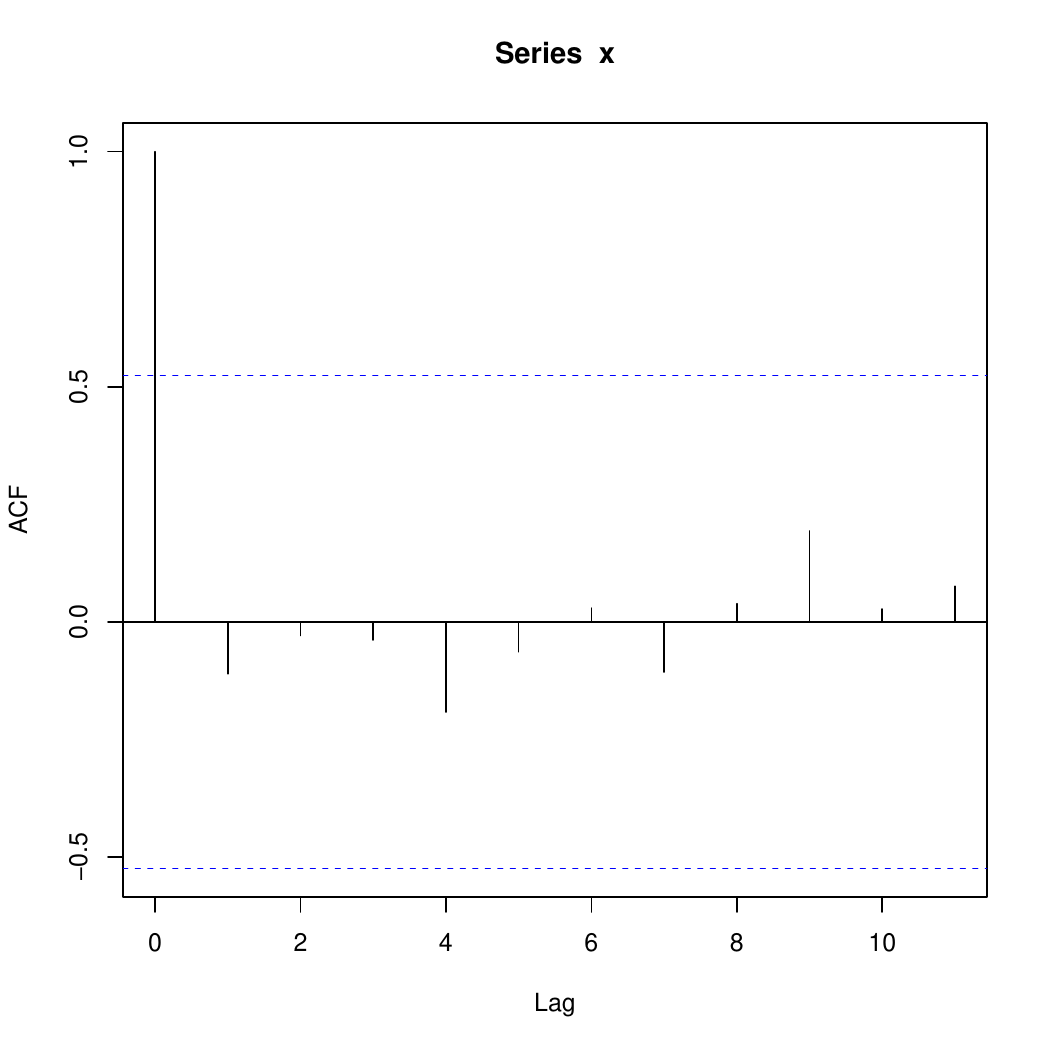}
		\label{fig:l3}}
	\hfill
	\subfloat[ACF of first order difference of monthly price per request.]{%
		\includegraphics[width=5.5cm]{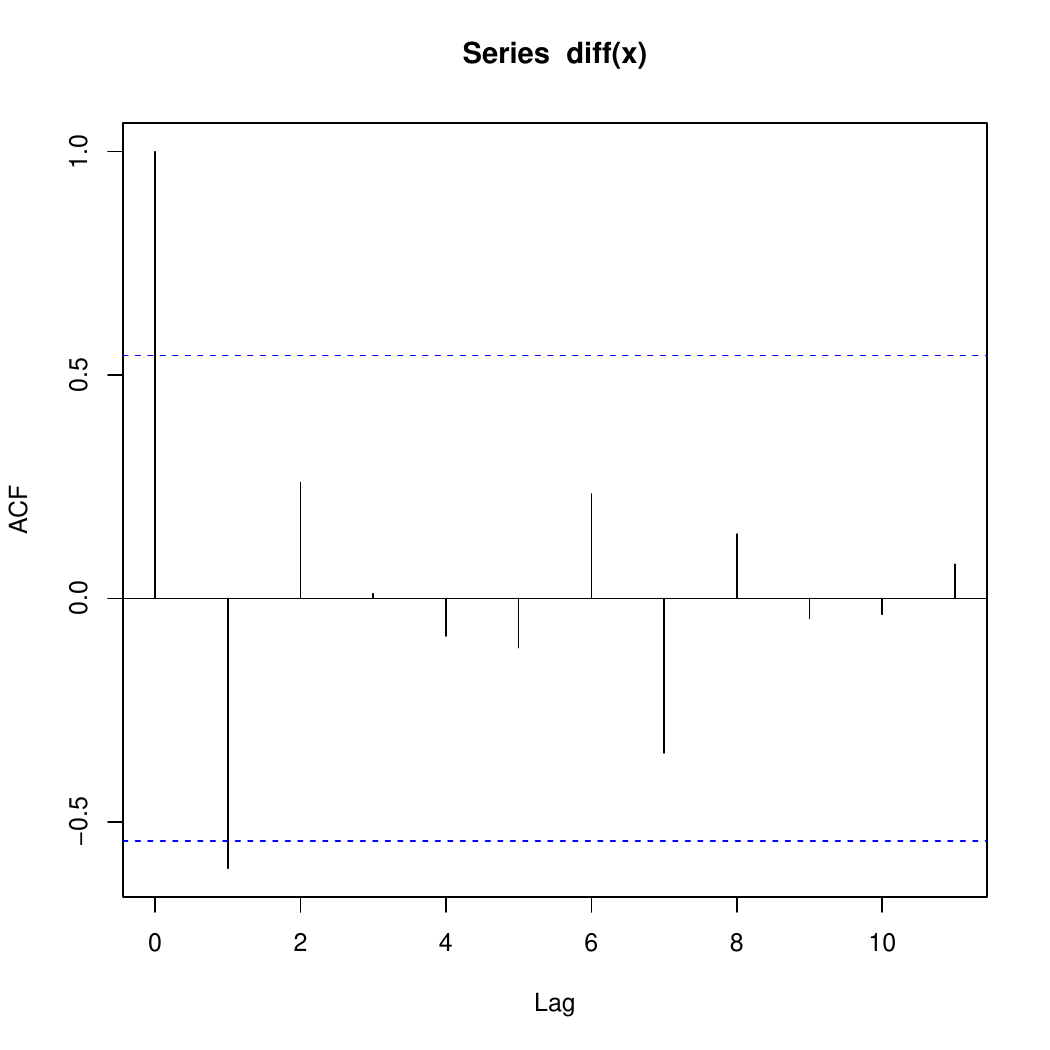}
		\label{fig:l4}}
	\caption{Autocorrelation functions (ACF) with time lags.}
	\label{acfplot}
\end{figure}
Before going any further into AR and MA models, we plot the autocorrelation function (acf) first. In the left panel of figure (\ref{acfplot}) we see the movement of the trend over time (in months). On the vertical axis we plot the acf defined in (\ref{1}) and on the horizontal axis we measure the time with lag, where lag=0 represents the time period t, and lag=2 then represents time $t-2$ (in which the vertical line corresponding to its autocorrelation is $\rho_{t,t-2}$). In short, the autocorrelation value with lag=$q$ can be obtained by the height of the vertical line, $\rho_{t,t-q}$. In panel \ref{fig:l3}, if we notice a trend of growing height of the vertical lines, we do not have convergence towards zero at a higher time lag. This implies that our dataset is not trend stationary. On the other hand, if we take the first order difference of monthly price per request (i.e., $y_t-y_{t-1}$), and perform a similar autocorrelation analysis, we would be able to see some convergence as shown in panel \ref{fig:l4}. This is our desirable case as in the behavior of people is myopic. In our case this myopic behavior implies recent monthly price per request of recent months has more impact than earlier months in explaining current monthly price per request. Moreover, figure \ref{fig:l4}, the heights of the vertical lines are \\
(i) very high for lag between 0 and 2,\\
(ii) have relatively shorter height for lag between 3 and 5,\\
(iii) have increased height between the lag of 6 and 8, and \\
(iv) fell thereafter.\\
This clearly exhibits seasonality in the first difference in monthly price per request data.

To test for stationarity of monthly price per request, we utilize the \emph{Augmented Dicky-Fuller} and \emph{Phillips-Perron} unit root tests. Consider the model
\begin{align}\label{2}
\Delta\ y_t &= \mu +\beta\ t + \gamma y_{t-1}+\delta_1\ \Delta y_{t-1}+ \delta_2\ \Delta y_{t-2}+\ ...\ + \delta_{q-1}\ \Delta y_{t-(q-1)}+\epsilon_t,
\end{align}
where $\Delta y_{t-i} = y_{t-i}-y_{t-(i-1)},\ \forall\ i=0,1,...,14$, $\gamma$ is the coefficient of price per request at time $(t-1)$, $\mu$ is the mean, $\beta$ is coefficient on the time trend and $\delta_i$ is the coefficient of the $i^{th}$ lag. Our main objective is to test if $\gamma$ is zero. If it takes the value zero then we conclude that, the system is not trend stationary. To check it we perform  \emph{Augmented Dicky-Fuller} unit root test. Here the null hypothesis is
\[
\mathscr{H}_0:\ \gamma=0,
\]
versus the alternate hypothesis
\[
\mathscr{H}_1:\ \gamma<0.
\]
In this test we assume under the null hypothesis (i.e., $\mathscr{H}_0$) that unit root is present and under the alternative hypothesis (i.e.,$\mathscr{H}_1$) we assume that monthly price per usage has trend stationarity. The \emph{Augmented Dicky-Fuller} test statistic is defined as 
\begin{equation}\label{3}
ADF_\gamma=\frac{\tilde \gamma}{S.E.(\tilde \gamma)},
\end{equation}
where $\tilde\gamma$ is the predicted value of $\gamma$ after running the ordinary least square method of equation (\ref{2}), where $S.E.(\tilde \gamma)$ is the standard error of the predicted value of $\gamma$. Under this test if at certain percentage level of significance (i.e., $\alpha$) the P-value$>\alpha$ we do not have enough evidence against the null hypothesis that $\mathscr{H}_0: \gamma=0$. Therefore, at that $\alpha$-level of significance monthly price per request has unit root problem and then monthly price per request is not stationary under that $\alpha$-level of significance.

On the other hand, \emph{Phillips-Perron} unit root test considers the following model,
\begin{equation}\label{4}
\Delta y_t=\gamma\ y_{t-1}+\epsilon_t.
\end{equation}
The \emph{Phillips-Perron} test shares the same null and alternative hypotheses as the \emph{Augmented Dickey-Fuller} test and also uses the same t-statistic.

The key difference between the two tests is that the \emph{Phillips-Perron} test is non-parametric, whereas the \emph{Augmented Dickey-Fuller} test accounts for serial correlation. As a result, the \emph{Phillips-Perron} test tends to perform better with large datasets. Therefore, it is recommended to apply both tests before proceeding further. In time series analysis, there is no single test statistic that can conclusively address the unit root problem in a system. Both the \emph{Augmented Dickey-Fuller} and \emph{Phillips-Perron} tests share drawbacks, such as sensitivity to structural breaks and weak power in small samples, which can lead to a conclusion of a unit root. Since both tests yielded the same result in this case, further testing is unnecessary. In both tests, we use the p-value to determine whether to reject the null hypothesis. We set the level of significance $\alpha$ at 0.05, a standard value, and test the hypothesis at a 95\% confidence level. In both the \emph{Augmented Dickey-Fuller} and \emph{Phillips-Perron} unit root tests, the p-value was 0.01688, which is less than 0.05. Therefore, at a 95\% significance level, we have enough evidence to reject the null hypothesis, indicating that the first difference of the monthly price per request has a unit root problem. Thus, the first difference is stable, as seen in panel \ref{fig:l4} of figure \ref{acfplot}. We opted not to perform a \emph{Dickey-Fuller} unit root test on the monthly price per request data because the sample size is less than 25, and the \emph{Dickey-Fuller} test does not perform well with such small samples.

\section{Model Selection.}

In this section, our objective is to determine the statistical modeling approach that most effectively characterizes the monthly price per request series. The analysis begins with an evaluation of widely used time series models, focusing on their ability to capture the underlying dynamics and temporal dependencies present in the data. Through this initial assessment, the autoregressive model of order one, AR(1), emerges as the most appropriate choice, offering the best fit among the standard models considered. However, the selection process does not conclude with model fit alone. To ensure that the chosen specification also delivers optimal predictive performance, we extend our search to include alternative statistical models that may yield a lower MSE than the AR(1) benchmark \citep{brockwell2005modified,yusuf2025predictive}. This secondary evaluation emphasizes prediction accuracy over in-sample fit, with the ultimate aim of identifying a modeling framework capable of producing the most stable and reliable median estimates for forecasting the monthly price per request. Such stability is essential, as it directly informs the accuracy and fairness of the request-based pricing structure developed later in the study.

\subsection{Time Series Analysis.}
To obtain a stable expected price per request, we analyze various time series models to determine the best fit for our dataset of monthly price per request spanning fourteen months. Rather than reviewing numerous models such as autoregression, moving averages, random walks, and exponential smoothing individually, we focus on the autoregressive integrated moving average model [ARIMA(p,d,q)] with different values of $p, d$, and $q$, where $p$ is the number of autoregressive terms, $d$ is the number of nonseasonal differences needed to be stationary, and $q$ is the number of lagged forecast errors in the prediction equation \citep{box1970distribution}. In terms of monthly price per request $y_t$, the general forecasting equation is
\begin{equation}\label{arima}
\hat y_t=\mu+\rho_1\ y_{t-1}+ ...+\rho_p\ y_{t-p}-\theta_1\ \epsilon_{t-1}-\ ...-\theta_q\ \epsilon_{t-q}.
\end{equation}
In equation (\ref{arima}) $\rho$'s and $\theta$'s are parameters of AR and MA, respectively. Let us first start with the autoregressive coefficients $\rho$'s. Suppose the value of $t$ is January, 2017. Then $y_{t-1}$ is monthly price per request in December, 2016. Now $\rho_1=\rho_{t,t-1}$ is the coefficient representing percentage change effect of $y_{t-1}$ on expected monthly price per request in January, 2017. From our previous discussion we know that this is nothing but the autocorrelation between time $t$ and $(t-1)$. These coefficients exist until $p^{th}$ time lag \citep{hua2019}. If we just stop here and add only the error term $\epsilon_t$ then this model becomes AR(p). Now, assume that the errors are also serially correlated. Thus, instead of $\epsilon_t$ we write $-\theta_1\ \epsilon_{t-1}-\ ...-\theta_q\ \epsilon_{t-q} $ with $\theta$'s represent how the error terms are related over time. We call $\epsilon_i$ as $i^{th}$ white noise. There are few cases normally examined when examining an ARIMA modeling problem, and we compute the AIC for each.  By the definition of AIC we know that if we have a statistical model M of monthly price per request data set, and if $m$ is the number of parameters and $\hat L$ the maximized likelihood value of that model (i.e., $\hat L=\ P(y_t|\hat\kappa, M)$) then $AIC=2\ m-2\ ln(\hat L)$. In this case $\hat\kappa$ is the vector of estimated parameter values of the model M. For the ARIMA model, the vector $\hat{\kappa}=(\mu\ \rho_1\ \rho_2\ \rho_3\ \dots\ \rho_p\ \theta_1\ \theta_2\ ...\ \theta_q)'$. In R file \emph{aics\_tests\_with\_smaller\_data.R} we did calculate the AIC values and the unit root tests.
\begin{table}[htbp]
	\caption{Performance of different autoregressive integrated moving average models}
	\centering
	\begin{tabular}{c c}
		\hline \hline
		ARIMA(p,d,q) & Akaike information criterion\\
		\hline
		First order autoregression= ARIMA(1,0,0) & -166.50\\
		First order moving average= ARIMA(0,0,1) & \textbf{-168.30}\\
		Random walk= ARIMA(0,1,0) & -148.86\\
		Differenced first-order autoregressive model= ARIMA(1,1,0) & -151.84\\
		Simple exponential smoothing= ARIMA(0,1,1) & -152.39\\
		Second order autoregression= ARIMA(2,0,0) & -164.74\\
		Second order moving average= ARIMA(0,0,2) & -167.23\\
		First order autoregressive moving average= ARIMA(1,0,1) & -166.88\\
		Autoregressive moving average (2,3)= ARIMA(2,0,3) & -166.57\\[1ex]
		\hline
	\end{tabular}
	\label{table:AIC}	
\end{table}

The model which has the smallest AIC is the best fit, and as such, we clearly see (see table \ref{table:AIC}) that first order moving average[ARIMA(0,0,1)] is our best fit model. Furthermore, we compute the median value of this model to be $0.00812937$, considering the median to mitigate the effect of outlying data. Another way to see a model is a good fit of a data is to check its MSE. By definition 
$$MSE = T^{-1}\sum_{t=1}^T\ (y_t-\hat y_t)^2,$$
where $y_t$ and $\hat y_t$ are the actual and the predicted values of monthly price per request at $t^{th}$ month, respectively. In the above equation $T^{th}$ rank represents the end time point of an experiment which in our case is December, 2016. As we are considering the monthly data, we are assuming the difference of each time period is same. Size of our data of experiment is 14, which makes $T=14$. In this model our MSE is $2.190918\ \times\ 10^{-7}$.

\subsection{Cubic Smoothing Spline Analysis.}
From the previous section we know that, by using the time series analysis MA(1) gives us the lowest AIC and the corresponding MSE is  $2.190918\ \times\ 10^{-7}$. Now we try to find if there exist a model which might give lower MSE. From statistical literature we know, lower MSE implies better fit of a model but, we have to remember that by concentrating too much on MSE we might end up with the problem of over fitting. Throughout our studies we consider unconstrained optimization (like, AR is the time series version of simple ordinary least square(OLS) method) through squared error minimization. Instead of just focusing on unconstrained optimization let us introduce some constraint of the function and multiply with a penalize parameter $\lambda$ (or the roughness parameter) and construct a Lagrangian constrained optimization problem. By doing this we will achieve a stable estimator of the coefficients with lower standard error.

Generally in regression analysis for a set of data ($x_i,Y_i$) $\forall i\in\mathbb{Z}$ and for all $x_1<x_2<\ ...\ <x_n$ we model 
\begin{equation}
\label{genmod1}
Y_i=g(x_i)+ \epsilon_i
\end{equation} 
Here, $\hat g(.)$ is a function of $x_i$ which is unknown, and $\epsilon_i$ is the error corresponding to the model in (\ref{genmod1}). Therefore, in equation (\ref{genmod1}) 
$$\epsilon_i=Y_i-\hat g(x_i).$$
In particular in our case we have just only one independent variable which is time. To find out the best possible fit of the above specified model, we  minimize the squared term of the error, to remove negative values of the differences. Hence, we  minimize 
$$\epsilon_i^2=\sum_{i=1}^T\ (y_{i}-\hat g(x_i))^2,$$ 
which is an unconstrained optimization. Following spline regression literature  we add a constraint $\int_{x_1}^{x_n}\ \hat g''(x)^2\ dx$, a roughness penalty, which can be thought of as the second order condition of the $\hat g(x)$. This kind of constraint is quite intuitive. $\hat g''(x)$ is the nothing but the second order derivative and which is the curvature of the function. If we minimize the error with the constraint of curvature of itself we can get some shrinkage of each coefficient corresponding to each month (as it is non-parametric) \citep{pramanik2024motivation}.

Let us discuss it in terms of simple linear regression. Then probably the concept of the spline regression would be more clear. Suppose, under the OLS $\hat g(x_i)=\beta\ x_i$, where $\beta$ is the coefficient corresponding to $x_i$ regressor. In this scenario $$\epsilon_i^2=(y_i-\beta\ x_i)^2.$$ By OLS our primary objective is to, $$\min_{\beta}\ \sum_{i=0}^n\ (y_i-\beta\ x_i)^2.$$ According to \citep{tibshirani1996} the objective function can be written as $$(\beta-\hat{\beta})'\ X'X\ (\beta-\hat{\beta})$$
which creates the \emph{elliptical} contours and doing the minimization of sum of squared error term stays at the lowest contour point. Now suppose we introduce a constraint $||\beta||^2$ first and $||\beta||$ in the second and see how the behavior of the minimization got change. In page 271 of \cite{tibshirani1996} shows that for the first constraint the minimization point touches the surface of the sphere to the upper contour than we got in the unconstrained minimization. For the second constraint we get the similar result but instead of sphere now the constraint looks like a square and the minimization occurs at the corner of that square. The objective function with $||\beta||^2$ is called the Ridge regression and the objective function with $||\beta||$ is called Least absolute shrinkage and selection operator (LASSO).

Since cubic smoothing spline is based on a non-parametric method we can not see the $\beta$'s directly. Hence,  the scenario is very similar to ridge regression. In this case,  the Lagrangian of the objective function becomes
\begin{equation}
\label{lagrange1}
\mathcal{L}=\sum_{i=1}^T\ (y_{i}-\hat g(x_i))^2\ +\lambda\ \int_{x_1}^{x_n}\ \hat g''(x)^2\ dx,
\end{equation}
where $\lambda$ is the smoothness parameter, $\lambda\in[0,\infty)$. If $\lambda\rightarrow 0$ then there is  no smoothing and we end up with an interpolating spline. On the other hand, if $\lambda\rightarrow\infty$ then the roughness parameter is huge and converges to linear least square estimates.
\begin{figure}[htbp]
	\centering
	\includegraphics[width=9cm]{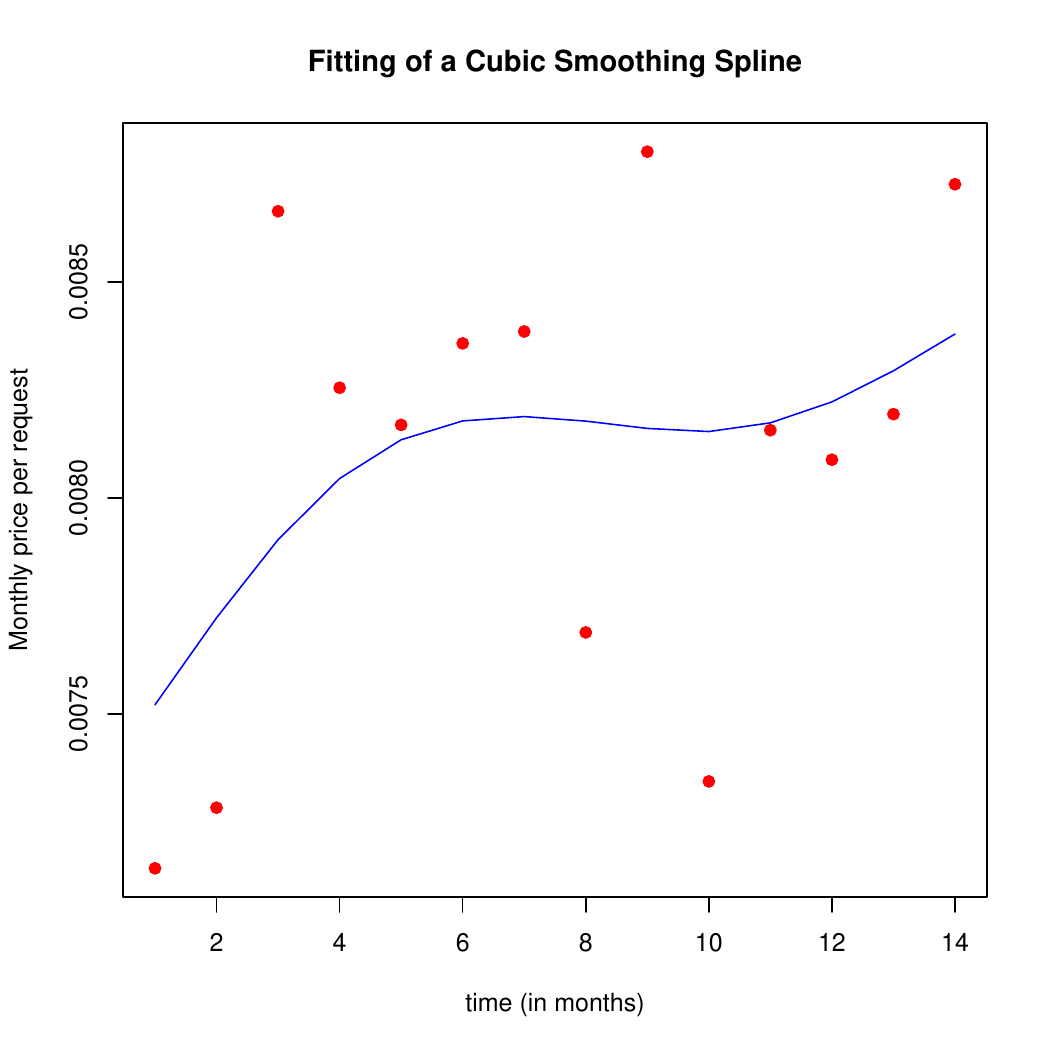}
	\caption{Spline fit for our model, where $\lambda$ follows Inverse gamma distribution with hyperparameters $\alpha=10$ and $\beta=4.7988$}
	\label{spline}
\end{figure}
In figure \ref{spline} we fit a cubic smoothing spline with our data set of monthly price per request. Horizontal and vertical axes represent time (in months) after October 2015 and monthly price per request, respectively. This model gives the MSE as $1.774695\times \ 10^{-7}$ which is significantly lower than we had for MA(1). To reduce the outlier effect, we take the median of the predicted value obtained from the model, $0.008168$, which is the predicted stable monthly price per request, to be used in the tier pricing model in the next sub-section.

\subsection{The Tier Pricing Model.}
While transitioning to a usage-based pricing model may be significantly fairer, a system with too many components could become difficult to explain and for customers to comprehend. Therefore, we have established the following guidelines for our pricing model:\\
(a). Any usage-based system should avoid invoices below a minimum threshold.\\
(b). To build a model that is both easy to explain/understand and keeps overall invoicing the same, our model will take a tiered structure.\\
(c). Since too many tiers could lead to confusion, we restrict to no more than 10 tiers.\\
Reviewing the usage of ProblemIT by eCW clients across every month of 2016, we split clients into 8 tiers (see table \ref{table:tierprice}) based on search request percentiles:
\begin{table}[htbp]
	\caption{Proposed tier pricing model}
	\centering
	\begin{tabular}{c c c c}
		\hline \hline
		Tier & Percentile & Number of search request & Price (per month) \\
		\hline
		1 & 0-25\% & 0-535 & \$4.37\\
		2 & 25\%-35\% & $<$ 855 & \$ 6.98 \\
		3 & 35\%-45\% & $<$ 1280 & \$ 10.45\\
		4 & 45\%-55\% &  $<$ 1870 & \$ 15.27\\
		5 & 55\%-65\% & $<$ 2765 & \$ 22.58\\
		6 & 65\%-78\% & $<$ 4620 & \$ 37.74\\
		7 & 78\%-85\% & $<$ 6975 & \$ 56.97\\
		8 & 85\%-100\% & $>$ 6975 & \$ 0.008168$\times$ number of requests\\[1ex]
		\hline
	\end{tabular}
	\label{table:tierprice}	
\end{table}
We set the price in each tier by multiplying the high end of the search usage in each tier by the price per request that was calculated using the semi-parametric Bayesian cubic smoothing spline from Section 5.1 (see figure \ref{spline}).

\section{Significance.}
Table \ref{table:invoice} represents the the actual monthly invoices and new suggested invoices for first four months of 2017 \footnote{eClinicalWorks. \url{https://www.eclinicalworks.com/products-services/ehr/}}. By table \ref{table:invoice} we know, if the proposed pricing had been implemented in January 2017, the estimated income for eCW ProblemIT would have been \$239,738.07, compared to the actual income of \$252,473 for that month \cite{evans2016electronic}. To illustrate this, Table \ref{table:invoice} compares the invoices received at the beginning of 2017 with what would have been generated under the suggested tiered pricing model. The decline in income is attributed to the seasonal nature of ProblemIT's usage; if the price per request remains constant, lower usage results in reduced income. However, it is possible that an increase in usage in the later months of 2017 could offset these losses.
\begin{table}[htbp]
	\caption{Actual invoice versus projected invoice}
	\centering
	\begin{tabular}{c c c c c}
		\hline \hline
		Month & Actual Invoices & New Pricing Invoices & Difference & Percentage Change\\
		\hline
		January & 252473.10 & 239738.073 & 12735.010 & -5.04\\
		February & 251774.00 & 223626.410 & 28147.600 & -11.18\\
		March & 253151.80 & 246359.841 & 6791.974 & -2.68\\
		April & 252576.50 & 225071.387 & 27505.120 & -10.89\\[1ex]
		\hline
	\end{tabular}
	\label{table:invoice}	
\end{table}
The number of customers who will experience an increase in their monthly payments is nearly equal to those who will pay less: 2,333 will pay more, while 2,312 will pay less \footnote{eClinicalWorks Pricing. \url{https://www.eclinicalworks.com/products-services/pricing/}}. For customers facing an increase, the average rise is \$57.59, which translates to about \$10.70 per reported user each month. In contrast, customers who will pay less with this pricing will save an average of \$41.57, equating to approximately \$1.96 per reported user per month. It is evident that clients who frequently under-report have a significant increase in their prices, which raises the average change, while most clients see only minor adjustments per reported user. A detailed breakdown is provided below.
\begin{figure}[htpb]
	\centering
	\centering
	\includegraphics[width=13cm]{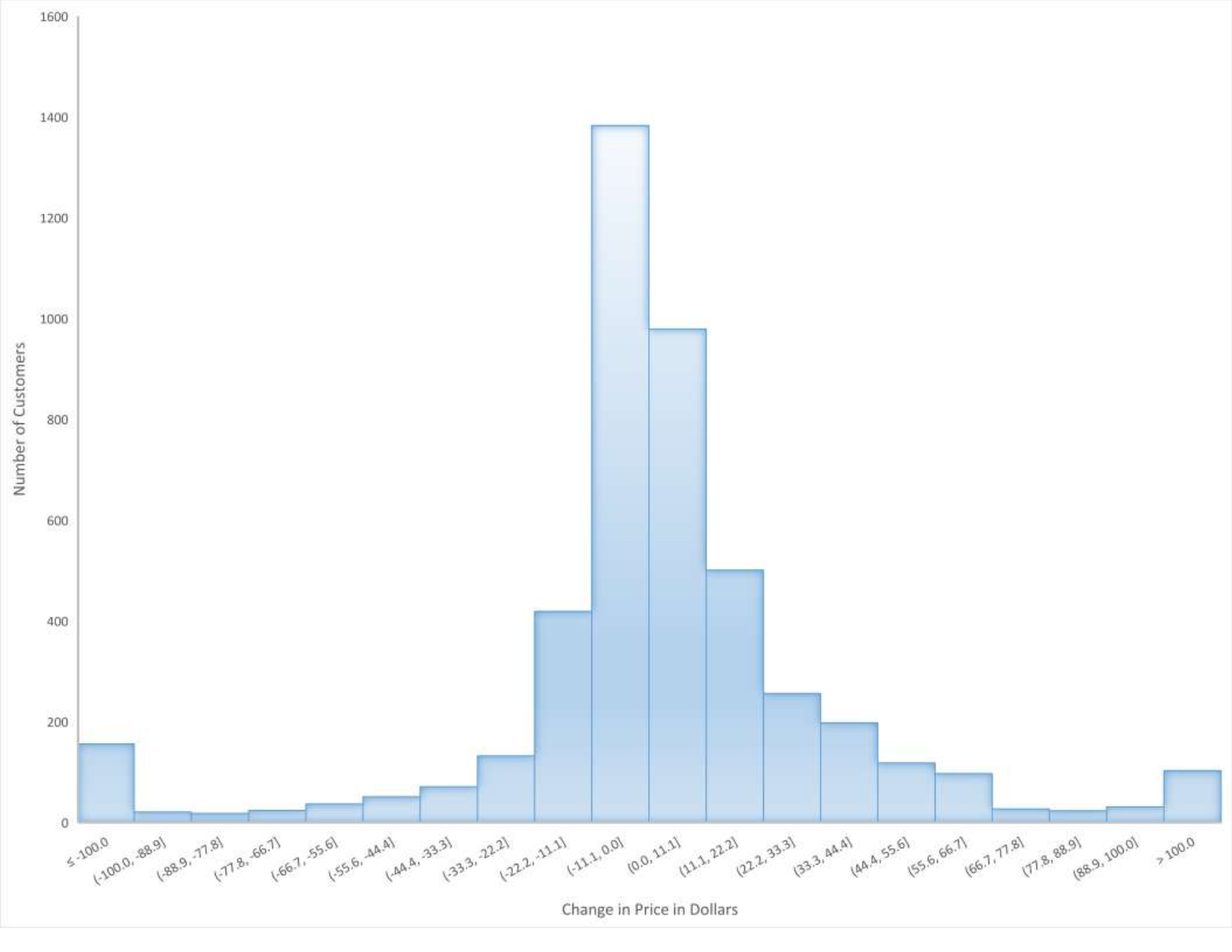}
	\caption{Dollar change in price in January, 2017}
	\label{money}
\end{figure}

\medskip

The tallest bar in figure \ref{money} represents the number of customers that will have their price per month decrease between zero dollars and \$11.10. To the left are all of the groups that will have their price decrease and the groups to the right will have their price per month increase. At the extreme ends of the graph we have overflow buckets of price increases of more than \$100 on the right side and decreases of more than \$100 on the left side.
\begin{figure}[htbp]
	\centering
	\includegraphics[width=14cm]{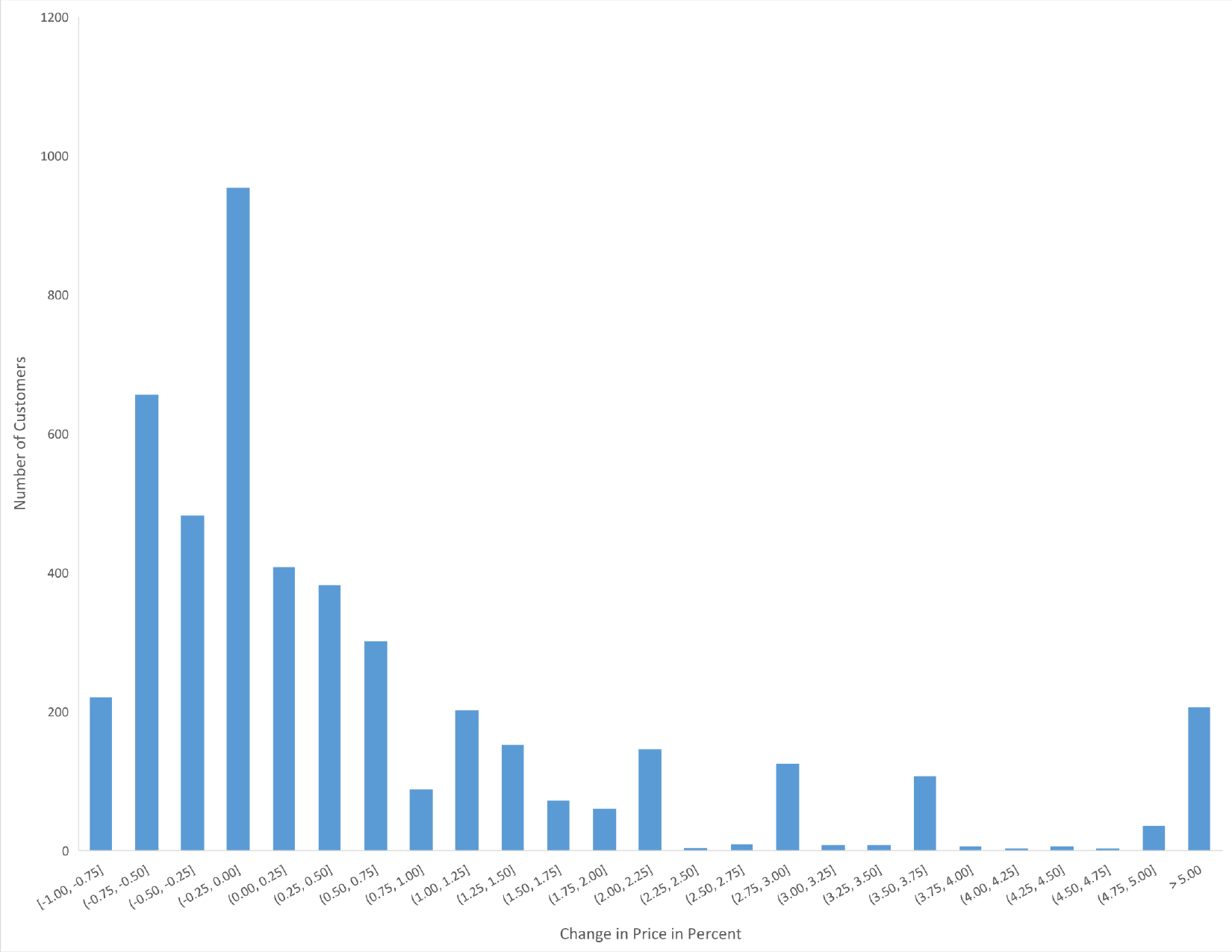}
	\caption{Percentage change in price}
	\label{price}
\end{figure}
Figure \ref{price} shows the same customers as in figure \ref{money}, but illustrates the changes in price as percentages. The graph is inherently skewed to the right because prices cannot decrease by 100\%, while they can increase by more than 100\%. The first four groups on the left will see a decrease in their prices, whereas the remaining groups consist of customers who will experience an increase \footnote{eClinicalWorks Pricing. \url{https://www.eclinicalworks.com/products-services/pricing/}}.

The customers that will have the highest increase in price per user have previously been identified as possible serial under-reporters in previous analysis done by BI when analyzing eCW provider-counts. Figures \ref{orlando} and \ref{sovereign} are based on \cite{ecl}.  For example, two notable examples are Orlando Family Physicians (GUID: e0986b8766d849f5)(see figure \ref{orlando}) and Sovereign Health Medical Group (GUID: 293675d59ef717af)(see figure \ref{sovereign}). Orlando Family Physicians reports having 30 end users, while Sovereign Health Medical Group reports only having one user. We will look at their 2016 usage in figure \ref{orlando}; it can be noted the top most dotted line represents $6x$ the ``typical'' usage for a customer with the same number of licenses as these example customers.

\begin{figure}[htbp]
	\centering
	\centering
	\includegraphics[width=12cm]{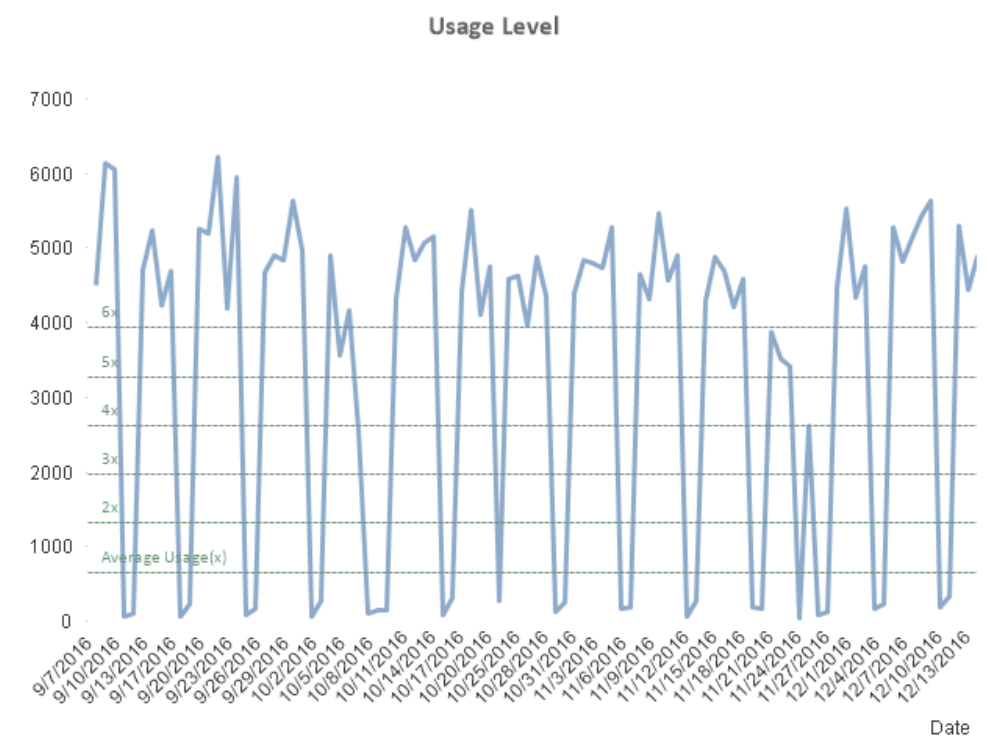}
	\caption{Orlando Family Physicians}
	\label{orlando}
\end{figure}

\begin{figure}[htbp]
	\centering
	\centering
	\includegraphics[width=12cm]{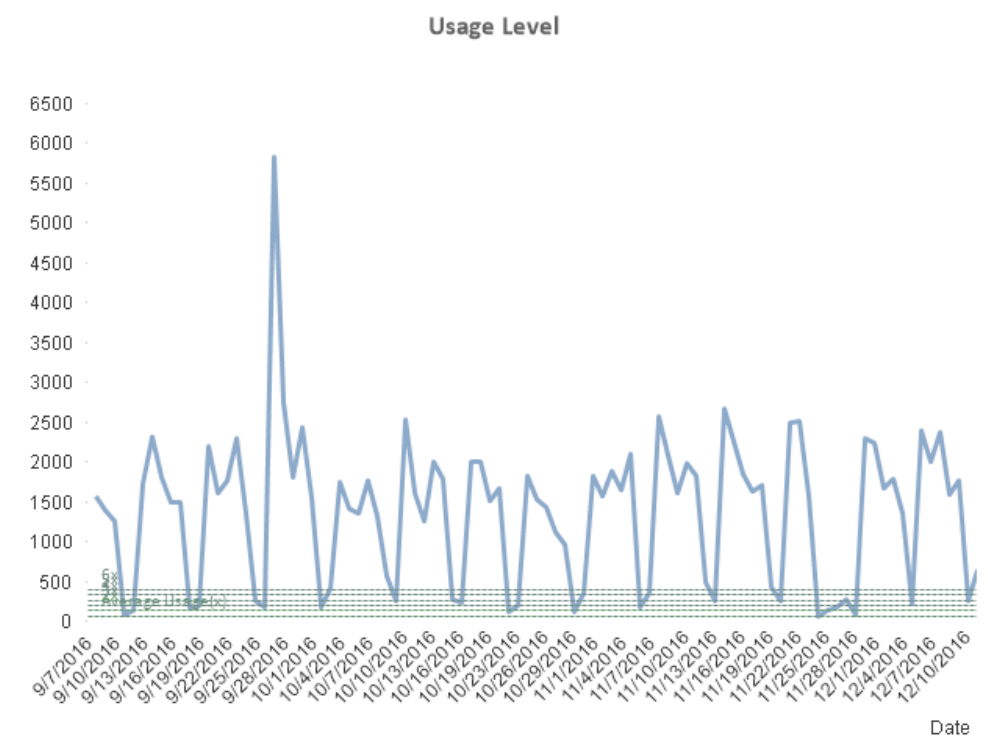}
	\caption{Sovereign Health Medical Group}
	\label{sovereign}
\end{figure}
It is clear that both customers have usage levels significantly higher than average, given the number of licenses they are paying for. In January 2017, Orlando Family Physicians paid \$162.60, but under the tier pricing model, they would have incurred a charge of \$1,638.40 (see figure \ref{orlando}). Similarly, Sovereign Health Medical Group paid \$5.42 in January 2017, whereas under the tier pricing model, their payment would have been \$368.42 (see figure \ref{sovereign}).

\section{Concluding Remarks.}

An examination of the projected cost implications under the proposed request-based pricing structure reveals that a significant number of accounts would experience an increase in their total charges. Specifically, 806 accounts report having only a single licensed user, while an additional 470 accounts report having two licensed users. This finding underscores a fundamental imbalance within the current user-based pricing framework, as it suggests that many organizations pay for licenses covering only one or two individuals despite usage patterns indicating that considerably more people are actively interacting with IMO’s Problem(IT) portal. Such discrepancies point toward the likelihood of under-reported user counts or shared account usage, which allows organizations to access the system far beyond the scale implied by their reported license numbers \citep{pramanik2025optimal}. In a purely license-based model, these clients may appear to be low-volume users in financial terms, yet their actual consumption of system resources places a disproportionately high demand on infrastructure and support services. The proposed request-based model seeks to address this misalignment by ensuring that costs are tied directly to the scale of system utilization rather than solely to the nominal number of licensed accounts \citep{pramanik2025construction}. By aligning charges with actual usage, the model aims to produce a more equitable cost distribution across the client base, improving both fairness and revenue recovery.

However, while the tiered, request-based approach addresses important issues of fairness and resource alignment, it also introduces certain potential challenges that must be considered carefully in the context of long-term customer engagement and product adoption. One possible drawback, supported by prior research \citep{ramsay1991some}, is that charging on a per-request basis may lead to a reduction in overall usage, as customers become more cautious in their system interactions to avoid incurring higher charges. From a financial standpoint, this could reduce transaction volume and limit revenue growth potential; more importantly, from a strategic perspective, it may hinder the broader organizational objective of promoting consistent and extensive use of all IMO products. This concern is particularly relevant for accounts that would see substantial cost increases under the new tiered framework. While such customers may currently be significantly under-reporting their actual user counts, the imposition of higher charges could still generate resistance, with some clients potentially reducing their reliance on the platform or even seeking alternative solutions \citep{pramanik2024measuring}. Consequently, we contend that the overarching goal of any revised pricing strategy should not be to discourage utilization, but rather to create a balanced structure that incentivizes continued engagement while ensuring that charges reflect actual system consumption. This will require careful calibration of the tier thresholds and per-request pricing levels, as well as proactive customer communication to foster understanding and acceptance of the revised model.

To address these potential drawbacks and ensure that the transition to a request-based pricing framework is both fair and sustainable, several mitigation strategies should be considered. One option is to adopt a hybrid pricing structure that combines a modest base fee per licensed user with a variable component tied to actual request volumes. This would allow customers to maintain predictable baseline costs while still accounting for high-volume usage in a proportionate manner. Another approach is to implement volume-based discounts or usage credits for clients whose request activity surpasses certain thresholds, thereby reducing the risk of sudden and substantial cost increases while still incentivizing efficient system use. Such measures could be particularly valuable for organizations that make extensive use of IMO’s Problem(IT) portal for legitimate operational needs, ensuring they remain engaged without perceiving the pricing change as punitive. Additionally, a phased rollout of the revised model could help smooth the transition by allowing customers time to adapt to the new structure, assess their usage patterns, and adjust workflows accordingly. Clear, proactive communication of the rationale behind the new model emphasizing fairness, transparency, and alignment between costs and resource utilization will also be essential for building customer trust and acceptance. By incorporating these safeguards, the request-based model can achieve its dual objectives: creating a more equitable distribution of costs while fostering long-term customer satisfaction and sustained engagement with IMO’s suite of products.

The introduction of a revised pricing framework that carries the potential to reduce engagement with IMO’s Problem(IT) platform and, in turn, decrease the company’s revenue, presents a series of strategic and operational challenges that warrant careful evaluation before implementation. A key concern is the risk that higher charges could inadvertently deter usage among certain customer segments, particularly those already under-reporting their actual level of system interaction \citep{valdez2025association}. To address this, it may be prudent to initially modify the pricing structure in a targeted manner, applying adjustments only to accounts where there is compelling evidence of substantial under-reporting. Such a selective approach would allow for a more measured transition, reducing the likelihood of alienating lower usage or accurately reported accounts while still addressing the fairness and revenue recovery issues at the core of the model. Over time, the company could periodically reassess whether a universal, fully usage-based framework remains the most appropriate long-term strategy, informed by data on customer behavior, retention, and revenue stability. However, the effectiveness of this selective adjustment hinges on the ability to accurately monitor user counts, a process that if handled manually could impose considerable demands on internal staff and resources \citep{valdez2025exploring}. To mitigate these burdens, an automated monthly auditing system could be deployed, enabling ongoing surveillance of license utilization and request activity without significantly increasing operational overhead. Such a system would allow for the continuous identification of accounts with substantial discrepancies between reported license counts and observed activity, thereby supporting the fair application of the revised model while preserving organizational efficiency.

The introduction of a revised pricing framework that carries the potential to reduce engagement with IMO’s Problem(IT) platform and, in turn, decrease the company’s revenue, presents a series of strategic and operational challenges that warrant careful evaluation before implementation. A key concern is the risk that higher charges could inadvertently deter usage among certain customer segments, particularly those already under-reporting their actual level of system interaction. To address this, it may be prudent to initially modify the pricing structure in a targeted manner, applying adjustments only to accounts where there is compelling evidence of substantial under-reporting. Such a selective approach would allow for a more measured transition, reducing the likelihood of alienating lower-usage or accurately reported accounts while still addressing the fairness and revenue recovery issues at the core of the model \citep{khan2023myb0}. Over time, the company could periodically reassess whether a universal, fully usage-based framework remains the most appropriate long-term strategy, informed by data on customer behavior, retention, and revenue stability. However, the effectiveness of this selective adjustment hinges on the ability to accurately monitor user counts, a process that---if handled manually---could impose considerable demands on internal staff and resources. To mitigate these burdens, an automated monthly auditing system could be deployed, enabling ongoing surveillance of license utilization and request activity without significantly increasing operational overhead. Such a system would allow for the continuous identification of accounts with substantial discrepancies between reported license counts and observed activity, thereby supporting the fair application of the revised model while preserving organizational efficiency.

The implementation of an automated monthly audit could be seamlessly integrated into the existing infrastructure of the IMO Data Warehouse, leveraging the same data sources already used in the development of the request-based pricing model. Key operational metrics such as license counts from the \texttt{channel\_sales\_fact} table, customer identifiers from the \texttt{customer\_dim} table, and portal request volumes from the \texttt{log.org\_stats} table---could be programmatically linked to produce a recurring compliance report. This report would identify customers whose actual request activity exceeds expected thresholds relative to their reported license counts, flagging cases of potential under-reporting for further review. Automation of this process would not only ensure consistent and timely monitoring but would also significantly reduce the manual effort otherwise required for such analysis. Additionally, by storing historical audit outputs, the system could track patterns of repeated discrepancies over time, enabling data-driven decisions regarding targeted adjustments to pricing structures or the initiation of customer outreach. Such a data-driven auditing mechanism would thus reinforce the transparency, fairness, and sustainability of the proposed pricing framework while aligning closely with the company’s operational capabilities.

\subsection*{Availability of data.}
Data sets  were obtained from Intelligent Medical Object's own database.
\subsection*{Competing interests.}
No potential conflict of interest was reported by the authors.	
\subsection*{Funding.}
No funding has been used to write this paper.

\section{Appendix.}

\subsection{Definitions and Theorems.}

\begin{thm}
	If $$y_t=\rho_{t,t-1}\ y_{t-1}+\epsilon_t$$ is an AR(1), and $$y_t=\epsilon_t+\theta\ \epsilon_{t-1}$$ is MA(1) , with $|\theta|<1$, then MA(1)=AR($\infty$) and vice versa.
\end{thm}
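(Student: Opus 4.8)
The plan is to treat both equations as linear filters acting on the white-noise sequence via the lag operator $L$, defined by $L^k y_t = y_{t-k}$ and $L^k \epsilon_t = \epsilon_{t-k}$. First I would rewrite the MA(1) as $y_t = (1+\theta L)\epsilon_t$ and the AR(1) as $(1-\rho L)y_t = \epsilon_t$, where $\rho = \rho_{t,t-1}$. With this notation the theorem reduces to a single algebraic fact together with an analytic one: that when $|\theta|<1$ (resp.\ $|\rho|<1$) the operator $1+\theta L$ (resp.\ $1-\rho L$) is invertible, with inverse given by the geometric series $\sum_{j\ge 0}(-\theta)^j L^j$ (resp.\ $\sum_{j\ge 0}\rho^j L^j$), and that this series defines a genuine (not merely formal) limit.

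For the direction MA(1) $\Rightarrow$ AR($\infty$), I would apply $\sum_{j=0}^{\infty}(-\theta)^j L^j$ to both sides of $y_t = (1+\theta L)\epsilon_t$. The product $\bigl(\sum_{j\ge 0}(-\theta)^j L^j\bigr)(1+\theta L)$ equals the identity because the coefficient of $L^k$ telescopes: it is $(-\theta)^k + \theta(-\theta)^{k-1}=0$ for $k\ge 1$ and $1$ for $k=0$. This yields $\sum_{j=0}^{\infty}(-\theta)^j y_{t-j} = \epsilon_t$, i.e.\ $y_t = -\sum_{j=1}^{\infty}(-\theta)^j y_{t-j} + \epsilon_t$, an AR($\infty$) with coefficients $-(-1)^j\theta^j$. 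Running the analogous computation with $\sum_{j\ge 0}\rho^j L^j$ applied to $(1-\rho L)y_t=\epsilon_t$ gives $y_t = \sum_{j=0}^{\infty}\rho^j \epsilon_{t-j}$, an MA($\infty$); and the same telescoping identity read in reverse shows that an invertible MA(1) and a stable AR(1) are two encodings of the same filter once one matches parameters ($\theta \leftrightarrow -\rho$), which is the content of the ``vice versa'' clause — the only subtlety being that the one-lag companion exists only in the formal-inverse sense, so this should be stated as an identity of the induced infinite-order representations.

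The hard part will be making the convergence rigorous rather than the lag-polynomial bookkeeping, which is a one-line calculation. I would establish convergence in mean square: note that under $|\theta|<1$ the process $\{y_t\}$ is weakly stationary with $\mathrm{Var}(y_t)=(1+\theta^2)\sigma^2<\infty$, write $\sigma_y^2$ for this common variance, and then for $M<N$ estimate $\bigl\| \sum_{j=M}^{N}(-\theta)^j y_{t-j} \bigr\|_{L^2} \le \sum_{j=M}^{N}|\theta|^j\, \|y_{t-j}\|_{L^2} = \sigma_y \sum_{j=M}^{N}|\theta|^j$, which tends to $0$ as $M,N\to\infty$ precisely because $|\theta|<1$. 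Completeness of $L^2(\Omega)$ then gives a well-defined limiting random variable, and passing to the limit in the finite telescoping identity confirms it satisfies the AR($\infty$) equation; the symmetric bound with $|\rho|<1$ and finite $\sigma^2$ handles the MA($\infty$) direction. I would flag that this analytic step is exactly where the hypothesis $|\theta|<1$ (equivalently, invertibility, and the implied finite second moment of $y_t$) is essential — without it the formal inverse of the lag polynomial need not correspond to any convergent series, and the equivalence fails.
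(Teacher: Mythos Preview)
Your argument is correct and follows essentially the same route as the paper: introduce the lag operator, write the MA(1) as a first-degree lag polynomial acting on $\epsilon_t$, and invert it via the geometric series under $|\theta|<1$ to obtain the AR($\infty$) representation; then do the symmetric computation for AR(1). The paper's proof differs only cosmetically: for the AR(1)$\to$MA($\infty$) direction it uses recursive back-substitution (iterating $y_{t-1}=\rho y_{t-2}+\epsilon_{t-1}$ into the AR equation and letting the truncation index go to infinity) rather than formally applying $\sum_{j\ge 0}\rho^j L^j$, but this is the same geometric-series inversion written out by hand. Your $L^2$ Cauchy argument for convergence is an addition the paper does not make---it simply asserts $\lim_{\upsilon\to\infty}\rho^\upsilon y_{t-\upsilon}=0$ without justification---so on that point your write-up is more careful than the original.
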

\begin{proof}
	\textbf{Case I: MA(1)=AR($\infty$).}\\

	Let us start with the first order moving average MA(1) with $$y_t=\epsilon_t+\theta\ \epsilon_{t-1}.$$ Here $\epsilon_t$ is the white noise error term of time t.
	Let define a lag or backshift operator L such that, $Ly_t=y_{t-1}\ \forall\ t>1$, or equivalently, $y_t=Ly_{t+1}\ \forall\ t>1$. The lag operator can be raised to arbitrary integer powers so that,
	$L^{-1}\ y_t=y_{t+1}$ and $L^\upsilon\ y_t=y_{t-\upsilon}$.
	Now,
	from the equation of the first order moving average we know that, 
	\begin{align}\label{a1}
	y_t&=\epsilon_t+\theta\ \epsilon_{t-1}\notag\\
	&=\epsilon_t-\theta\ L\ \epsilon_{t}\ [by\ using\ the\ definition\ of\ lag\ operator]\notag\\
	&=(1-\theta\ L)\epsilon_t\notag\\\implies\ \epsilon_t&=\frac{y_t}{1-\theta\ L},\ where\ |\theta|<1
	\end{align}
	
	From the definition of \emph{Cauchy} series we know for any variable $y$, $S_\infty=y+\theta\ y+ \theta^2\ y+...=y(1-\theta)^{-1}$, where $|\theta|<1$.
	After rewriting equation(\ref{a1}) we get,
	\begin{align}\label{a2}
	\epsilon_t&=\frac{y_t}{1-\theta\ L},\ where\ |\theta|<1\notag\\&=y_t+\theta\ L\ y_t+\theta^2\ L^2\ y_t+\theta^3\ L^3\ y_t+...\ to\ \infty\notag\\&=y_t+\theta\ y_{t-1}+\theta^2\ y_{t-2}+\theta^3\ y_{t-3}+...\ to \ \infty,\notag\\y_t&=y_t+\theta\ y_{t-1}+\theta^2\ y_{t-2}+\theta^3\ y_{t-3}+...+\epsilon_t
	\end{align}
	Equation (\ref{a2}) is nothing but AR($\infty$). Therefore, we get our desirable result.

	\textbf{Case II: AR(1)=MA($\infty$).}\\
	
	For $|\rho_{t,t-1}|<1$ we have our AR(1) model as, 
	\begin{align}\label{a3}
	y_t&=\rho_{t,t-1}\ y_{t-1}+\epsilon_t\notag\\y_{t-1}&=\rho_{t-1,t-2}\ y_{t-2}+\epsilon_{t-1}\notag\\y_{t-2}&=\rho_{t-2,t-3}\ y_{t-3}+\epsilon_{t-2},\ so \ on
	\end{align}
	Let us assume $\rho_{t,t-1}=\rho{t-1,t-2}=\rho{t-2,t-3}=\rho{t-3,t-4}=...=\rho$.
	After putting second equation in the place of $y_{t-1}$ of the first equation of the system (\ref{a3}) and increase up to power $\upsilon$ lag we get,
	\begin{align}\label{a4}
	y_t&=\rho_{t,t-1}\ y_{t-1}+\epsilon_t\notag\\&=\rho_{t,t-1}(\rho_{t-1,t-2}\ y_{t-2}+\epsilon_{t-1})+\epsilon_t\notag\\&=\rho^2\ y_{t-2}+\rho\ \epsilon_{t-1}+\epsilon_t,\ [as\ all\ \rho's\ are \ same]\notag\\&.\notag\\&.\notag\\&.\notag\\y_t&=\rho^\upsilon\ y_{t-\upsilon}+\sum_{k=0}^\upsilon\ \rho^k\ \epsilon_{t-k}, \ for\ |\rho|<1
	\end{align}
	As $|\rho|<1$ then $\lim_{\upsilon\rightarrow\infty}\ \rho^\upsilon\ y_{t-\upsilon}=0$. Therefore the first term of the right hand side of equation (\ref{a4}) vanishes and it becomes $y_t=\sum_{k=0}^\upsilon\ \rho^k\ \epsilon_{t-k}$ which is nothing but MA($\infty$).
\end{proof}

\subsection{Cubic Smoothing Spline.}
Generally in regression analysis for a set of data ($x_i,Y_i$) $\forall i\in\mathbb{Z}$ and for all $x_1<x_2<\ ...\ <x_n$ we model 
\begin{equation}
\label{genmod}
Y_i=g(x_i)+ \epsilon_i
\end{equation}
Here, $\hat g(.)$ is a function of $x_i$ which is unknown, and $\epsilon_i$ is the error corresponding to the model in (\ref{genmod}). Therefore, in equation (\ref{genmod}) 
$$\epsilon_i=Y_i-\hat g(x_i)$$
and to find a best fit model, we want to minimize the squared term of the error, to remove negative values of the differences. Hence, we wish to minimize 
$$\epsilon_i^2=\sum_{i=1}^T\ (y_{i}-\hat g(x_i))^2.$$ 
This is an unconstrained optimization. Under spline literature with this objective we add another constraint $\int_{x_1}^{x_n}\ \hat g''(x)^2\ dx$, a roughness penalty, which can be thought of as the second order condition of the $\hat g(x)$. Therefore, the Lagrangian of the objective function becomes
\begin{equation}
\label{lagrange}
\mathcal{L}=\sum_{i=1}^T\ (y_{i}-\hat g(x_i))^2\ +\lambda\ \int_{x_1}^{x_n}\ \hat g''(x)^2\ dx,
\end{equation}
where $\lambda$ is the smoothness parameter, $\lambda\in[0,\infty)$. If $\lambda\rightarrow 0$ then there is  no smoothing and we end up with an interpolating spline. On the other hand, if $\lambda\rightarrow\infty$ then the roughness parameter is huge and converges to linear least square estimates.

\begin{definition}
	A function $\hat g(x)$ is a spline of degree k on $[b,d]$ if
	\begin{itemize}
		\item $\hat g \in C^{k-1}[b,d]$ (continuous and $k$-times differentiable on our closed interval)
		\item $b=l_1<l_2<l_3< ... <l_n=d$ and 
		\begin{equation}
		\hat g(x)=\begin{cases}
		\hat g_0(x), & \text{$b=l_1\leq\ x\leq\ l_2$}\\ \hat g_1(x), & \text{$l_2\leq\ x\leq\ l_3$}\\ ...\\\hat g_{n-1}(x), & \text{$l_{n-1}\leq\ x\leq\ l_n=d$}
		\end{cases}\notag
		\end{equation}
	\end{itemize}
	where $ \hat g_{n-1}\in\mathbb{P}^k$, a $k^{th}$ degree polynomial.
\end{definition}

Taking a more specific version of this definition 
\begin{definition}
	A cubic smoothing spline $\hat g(x)$ is defined as 
	\begin{equation}
	\hat g(x)=\begin{cases}
	\hat g_0(x)=\alpha_0 x^3+\beta_0x^2+\gamma_0 x+\delta_0, & \text{$b=l_1\leq\ x\leq\ l_2$}\\
	\hat g_1(x)=\alpha_1 x^3+\beta_1x^2+\gamma_1 x+\delta_1, & \text{$l_2\leq\ x\leq\ l_3$}\\ ...\\\hat g_{n-1}(x)=\alpha_{n-1} x^3+\beta_{n-1}x^2+\gamma_{n-1} x+\delta_{n-1}, & \text{$l_{n-1}\leq\ x\leq\ l_n=d$}
	\end{cases}\notag
	\end{equation} 
	which satisfies
	
	$\hat g(x)\in C^2[l_1,l_n]$, and for all  i=1,2,...,(n-1) 
	\begin{equation}
	\ \hat g(x)=\begin{cases}
	\hat g_{i-1}(x_i)=\hat g_i(x_i)\\\hat g_{i-1}'(x_i)=\hat g_i'(x_i)\\\hat g_{i-1}''(x_i)=\hat g_i''(x_i)\\
	\end{cases}\notag
	\end{equation}
	where $\alpha_0,..,\alpha_{n-1},\beta_0,...,\beta_{n-1},\gamma_0,...,\gamma_{n-1},\delta_0,...,\delta_{n-1}$ are coefficient of the third degree polynomials.		
\end{definition}

The main reason behind using \emph{cubic smoothing spline} is that among all functions $\hat g \in C^2[b,d]$ interpolating ($Y_i,x_i$), it gives the smoothest curve, where the smoothness is measured by $\int_{x_1}^{x_n}\ \hat g''(x)^2\ dx$. Further more, as $\lambda\geq 0$ we assume it follows inverse gamma distribution with the probability density function (pdf)
\begin{equation}\label{ig}
f(x)=\frac{\beta^\alpha}{\Gamma(\alpha)}\ x^{-\alpha-1}\ \exp\left\{-\frac{\beta}{x}\right\}
\end{equation}
where $\beta=4.7988$ and $\alpha=10$. Throughout our analysis $x_i$ is nothing but different time periods (in months) and $y_i$ is monthly price per request. Now, after taking the mean of inverse gamma distribution as $\beta\ (\alpha-1)^{-1}$ (i.e., $0.5332$). We calculate the total number of knots by dividing this fourteen month into eight quantiles and take the length of it. As we have only two parameters $\lambda$ and total number of knots, assuming their distribution makes this spline as Bayesian cubic smoothing spline. Inverse gamma and nonparametric distributions are the prior distributions of this model. Furthermore, as We assume one parameter $\lambda$ has a parametric distribution and another parameter (i.e., total number of knots) has nonparametric distribution. This makes this model semi parametric Bayesian cubic smoothing spline.

\bibliographystyle{apalike}
\bibliography{bib}
\end{document}